\newtheorem{theorem}{Theorem}[section]
\newtheorem{lemma}[theorem]{Lemma}
\newtheorem{conjecture}[theorem]{Conjecture}
\definecolor {name} {rgb} {0.5,0.0,0.0}
\newcommand{\myparNS}[1]{\noindent{\bfseries #1}}
\newcommand{\mypar}[1]{\smallskip\myparNS{#1}}
\long\def\conf#1{%
}
\author{Rodrigo I. Silveira\affiliationmark{1}
  \and Bettina Speckmann\affiliationmark{2}
  \and Kevin Verbeek\affiliationmark{2}}
  \title{Non-crossing paths with geographic constraints\thanks{A preliminary version of this work appeared in \emph{Proc. 25th International Symposium on Graph Drawing and Network Visualization (GD 2017)}.}}
\affiliation{
  Dept. de Matem\`atiques, Universitat Polit\`ecnica de Catalunya, Spain\\
  Dept. of Mathematics and Computer Science, TU Eindhoven, The Netherlands}
\keywords{non-crossing connectors problem, constrained graph drawing}
\begin{document}

  \maketitle

  \begin{abstract}
  A \emph{geographic network} is a graph whose vertices are restricted to lie in a prescribed region in the plane. In this paper we begin to study the following fundamental problem for geographic networks: can a given geographic network be drawn without crossings?
We focus on the seemingly simple setting where each region is a vertical segment, and one wants to connect pairs of segments with a path that lies inside the convex hull of the two segments.
We prove that when paths must be drawn as straight line segments, it is NP-complete to determine if a crossing-free solution exists, even if all vertical segments have unit length.
In contrast, we show that when paths must be monotone curves, the question can be answered in polynomial time.
In the more general case of paths that can have any shape, we show that the problem is polynomial under certain assumptions.
 \end{abstract}


\section {Introduction}

Highway, train, and river networks, airline and VLSI routing maps, information flow over the Internet, and the flow of goods and people between different regions all have one thing in common: they can be effectively visualized as a \emph{geographic network}: a graph, whose embedding is fixed, but not completely. The vertices of a geographic network are restricted to lie in a prescribed region while the edges might or might not be required to follow a particular course. In this paper we begin to study the following fundamental problem for geographic networks: can a given geographic network be drawn without crossings?

Many different formulations of this problem exist, which differ in aspects like the shape of the regions, the type of curve used to draw edges, and the type of graph being drawn.
We study the seemingly simple variant where each region is a vertical segment. We restrict the edges to be drawn as simple curves that lie inside the convex hull of the vertical segments corresponding to the end vertices, to force edges to be more or less straight. 

\begin{figure}[t]
	\centering
		\includegraphics{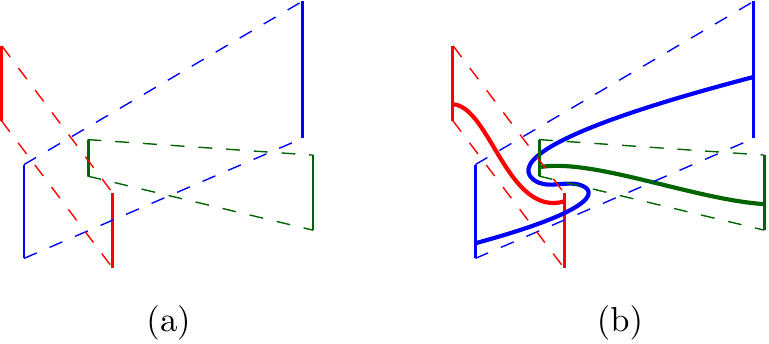}
		\caption{Example of the problem. (a) Instance with three tubes. (b) A possible solution; note that this instance has no solution if all edges must be drawn as straight line segments.}
	\label{fig:example}
\end{figure}

More formally, we are given a graph $G=(V,E)$ and, for each vertex $v \in V$, a vertical segment region $I_v$. For each edge $(u,v) \in E$, we define the \emph{tube} $T_{uv}$ of $(u,v)$ as the convex hull of $I_u \cup I_v$. The goal is to determine if it is possible to draw each vertex $v \in V$ as a point $p_v \in I_v$, and each edge $e=(u,v) \in E$ as a path from $p_u$ to $p_v$ that is contained in $T_{uv}$, such that no two paths cross at a point interior to both.
When this is possible, we say that the tube is \emph{connected}.
Moreover, we assume that $G$ is a matching, so we can solely focus on drawing the edges.
We consider three different ways to draw the edges: straight line segments, $x$-monotone curves, and arbitrary paths.
Figure~\ref{fig:example} illustrates the problem with an example.

\mypar{Related work.}
Drawing graphs or connecting points without crossings has received considerable attention, dating back to classic routing problems (e.g.,~\cite{lp-oprr-83,Lynch75}), and including the vast literature on planar graph drawing (see, e.g.,~\cite{DuncanG13,NishizekiR04,Patrignani13,Vismara13}).

The problems arising when vertices must be inside prescribed (geographic) regions have not been explored that much.
In the limit case of regions that are points---thus fixed vertex positions---there is still freedom on how to draw the edges, leading to a variety of optimization problems, such as drawing edges as polygonal lines with the minimum possible number of bends~\cite{PachW01}, or minimizing the total edge length used~\cite{ChanHKL13}.
The problem changes considerably when there is a continuous set of possible positions for each vertex, as in the problem we focus on in this paper.

To the best of our knowledge, only force-directed layout methods for some particular cases of the problem have been proposed~\cite{aahs-ndgcv-05}.
In that work, each vertex is constrained to lie inside a given polygon, and edges must be drawn as straight line segments. Abellanas et al.~\cite{aahs-ndgcv-05} considered two different goals: minimizing the number of edge crossings and achieving uniform edge length.
However, the methods used were heuristic and of a very different nature than those proposed here.

On the more algorithmic side, there are quite a few problems, studied under many different names, that may resemble the ones studied here, in one or more aspects.
Next we review the most relevant ones.
%
%

Clustered-planarity (c-planarity for short) is a concept developed for drawing clustered graphs.
Essentially, a drawing of a clustered graph is called c-planar if it is a plane drawing, and the vertices of each cluster are drawn inside their own connected region of the plane (see Feng et al.~\cite{FengCE95b}  for a formal definition).
Thus in c-planar drawings, vertices are constrained to lie in certain regions, but these regions are not prescribed, and moreover, the restriction is on the clusters, and not on vertices.
Probably the variant most related to our problem is that of fitting planar graphs to planar maps~\cite{Alam2015}.
In this setting, in addition to the clustered graph, one is given a \emph{compatible} polygonal map, and the goal is to determine if there is a mapping of clusters to polygons that results in a c-planar drawing. Note, however, that the polygons are not pre-assigned to clusters, and, again, that region constraints are for clusters and not for individual vertices.


A setting closer to ours has been considered in the context of data imprecision in computational geometry, where a set of imprecise points is given, meaning that each point can be drawn anywhere inside a given region.
In particular, it has been shown that if the regions are vertical line segments or scaled copies of an arbitrary shape, and the paths are straight line segments, determining if one can draw a cycle without crossings is NP-hard~\cite{loffler11}.
Thus this differs from our setting mainly in the type of graph: cycles, while we focus on matchings.

Aloupis et al.~\cite{ALOUPIS201378} studied the problem of non-crossing matchings between points and geometric objects. In that problem, each edge connects a point to a  geometric object or to a set of points.
Aloupis et al.~\cite{ALOUPIS201378} showed that the problem is polynomially solvable in some special cases, most notably when matching a point to one of two other points, and NP-hard when the number of options increases. In particular, they  showed that our problem for vertical segment regions and straight line segment paths is NP-hard, a fact that was also proved earlier in the Master's thesis of one of the authors of the current paper~\cite{v-ncpfe-08}.
The same problem with unit-size square regions, but drawing general planar graphs instead of matchings, was also shown to be NP-hard~\cite{Angelini2014}.

A problem similar to the second variant studied in this paper, where edges are drawn as $x$-monotone paths, is that of Manhattan-geodesic embeddings of  planar graphs~\cite{Katz2009}.
The goal there is to connect points with non-crossing paths that are rectilinear and $xy$-monotone. This problem was recently shown to be NP-hard~\cite{Klemz2017}. This is in contrast to our problem, which we show to be solvable in polynomial time. Moreover, our problem differs from the Manhattan-geodesic embeddings problem in two important aspects:  endpoints of edges are not fixed, and paths are restricted to lie inside tubes.

Finally, the most relevant previous work in our context deals with the \emph{non-crossing connector problem}, studied by Kratochv{\'{\i}}l and Ueckerdt~\cite{ku-nccp-13}:
given $m$ sets of points $P_i$, $1 \leq i \leq m$, and a region $R_i$ (with $P_i \subset R_i$) for each $i$, the goal is to compute a curve inside each region $R_i$ that goes through all the points in $P_i$ and no two curves cross.
Kratochv{\'{\i}}l and Ueckerdt~\cite{ku-nccp-13} showed that non-crossing connectors always exist if the regions are pseudo-disks  (i.e., the boundaries of any two
regions intersect in at most two points). Their proof proceeds constructively, providing a method to connect the points within each region while avoiding crossings with other connections. For regions that are not pseudo-disks, existence can be decided in polynomial time only for a few cases, while in general the problem is NP-complete.
An important difference with our setting is that \emph{all} given points $P_i$ in each region must be connected.\footnote{Confusingly, Kratochv{\'{\i}}l and Ueckerdt~\cite{ku-nccp-13} quoted one of the current authors as stating that our problem is NP-complete for monotone paths, which is not the case. Also, they claimed that in the aforementioned Master's thesis~\cite{v-ncpfe-08} it was proven that the problem is NP-complete for unit segments, but that is not the case either: the reduction there requires segments of several lengths.}
Still, we will build on top of this result for the case of arbitrary paths, presented in Section~\ref{sec:general}.

\paragraph{Results and organization.}
We study the problem for different restrictions on the path representing the edges. In Section~\ref{sec:hardness} we show that the problem is NP-complete if the paths must be straight line segments, even when the vertical segments defining the tubes have unit length.
In Section~\ref{sec:monotone} we show that, if paths must be $x$-monotone curves, we can decide in polynomial time if a crossing-free drawing exists. For arbitrary paths we can provide such a polynomial-time algorithm only under certain assumptions, as shown in Section~\ref{sec:general}.

%

\section{Straight line paths}
\label{sec:hardness}

In this section we show that, if the edges must be drawn as straight line segments, the problem is NP-complete, even if all the segments defining the tubes have unit length. Let $V = \{v_1, \ldots, v_{2n}\}$ and let $I_i$ be the unit length vertical segment associated with $v_i$. For convenience we assume that there is an edge between $v_{2i-1}$ and $v_{2i}$ ($1 \leq i \leq n$), and let $T_i$ be the corresponding tube.

\begin{wrapfigure}[8]{r}{0.28\textwidth}
  \centering
  \includegraphics[width=0.25\textwidth]{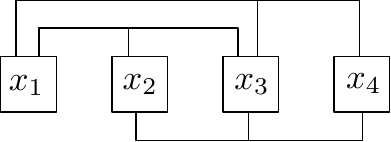}
  \caption{A rectilinear planar embedding of a 3-SAT formula.}
  \label{fig:Rectilinear3SAT}
\end{wrapfigure}

%
%

We prove NP-hardness by reduction from \textsc{Rectilinear planar 3-SAT}~\cite{KnuthR92}.
An instance of this NP-complete problem consists of a 3-SAT formula
and a rectilinear embedding of the graph associated to the formula.
In the embedding all variable vertices lie on a straight line, and clauses are represented as horizontal line segments with at most three vertical line segments that connect to the variables appearing in the clause.
See \figurename~\ref{fig:Rectilinear3SAT} for an illustration of four variables and three clauses. The reduction relies on the following gadgets for variables and clauses.

\begin{figure}[t]
	\centering
		\includegraphics[scale=1,page=1]{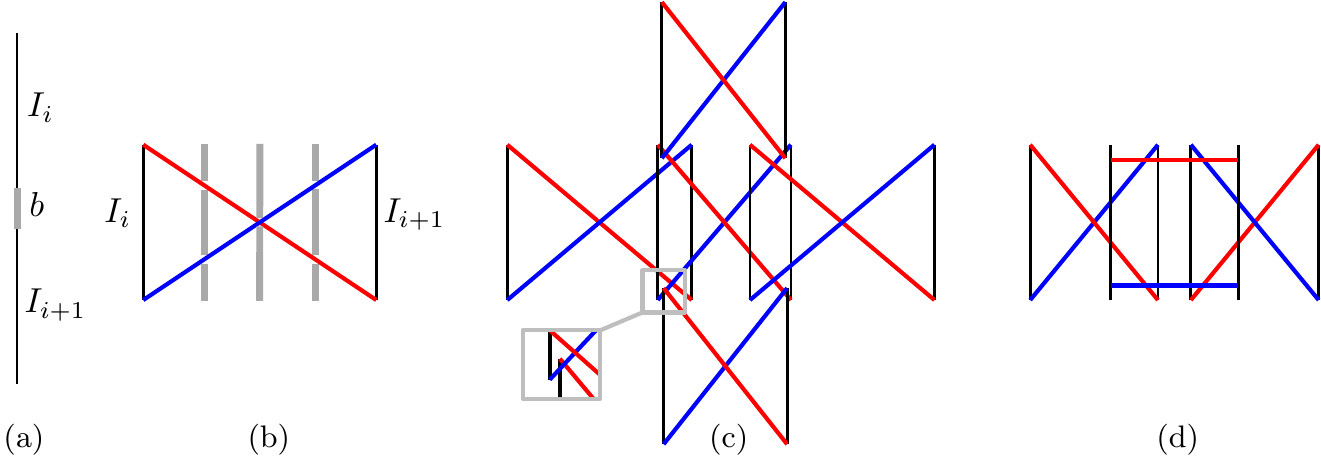}
		\caption{Gadgets in the NP-hardness reduction. Black segments have unit length. (a) A blocker $b$. (b) A basic variable gadget, based on eight blockers (omitting for clarity the unit segments defining each of them). (c) Variable gadgets can be connected to propagate a truth value (blue or red diagonal). Note that blockers are not shown for clarity. (d) Negation of a variable.}
	\label{fig:NP-hardness}
\end{figure}


\mypar{Blockers.}
An essential building block is the construction of vertical edges that cannot be crossed by any segment in a solution, see \figurename~\ref{fig:NP-hardness}(a).
This is achieved by placing a tube connecting two disjoint vertical segments $I_i$ and $I_{i+1}$ exactly above each other,\footnote{Note that the degenerate situation of two equal $x$-coordinates can be avoided by using small perturbations. The same applies to the other gadgets that make use of collinearities: they can all be removed while preserving the behavior of the gadgets.} forcing the segment between $I_i$ and $I_{i+1}$ to be part of any path connecting the tube.

\mypar{Variable gadgets.}
The main component for modeling variables is the \emph{basic gadget} shown in \figurename~\ref{fig:NP-hardness}(b).
Using a small set of blockers, we can limit the possible connections for a tube to only two, shown in blue and red in the figure. These two solutions will correspond to the truth values \emph{true} or \emph{false} of the variable or literal. In general, if we want to limit the possible connections for a tube to a constant number of options, we can enforce this using a constant number of blockers. One generic way to achieve this is to choose three vertical segments (at arbitrary $x$-coordinates) spanning the tube and let these be interrupted by the chosen possible connections. In a non-degenerate situation this will leave only the chosen possible connections as options.
%
As shown in \figurename~\ref{fig:NP-hardness}(c), several basic gadgets can be connected in order to propagate the value in any of four directions.
The value of a variable can be negated by adding a tube with two horizontal segments as options, as shown in \figurename~\ref{fig:NP-hardness}(d).


\begin{figure}[t]
	\centering
		\includegraphics[scale=1,page=1]{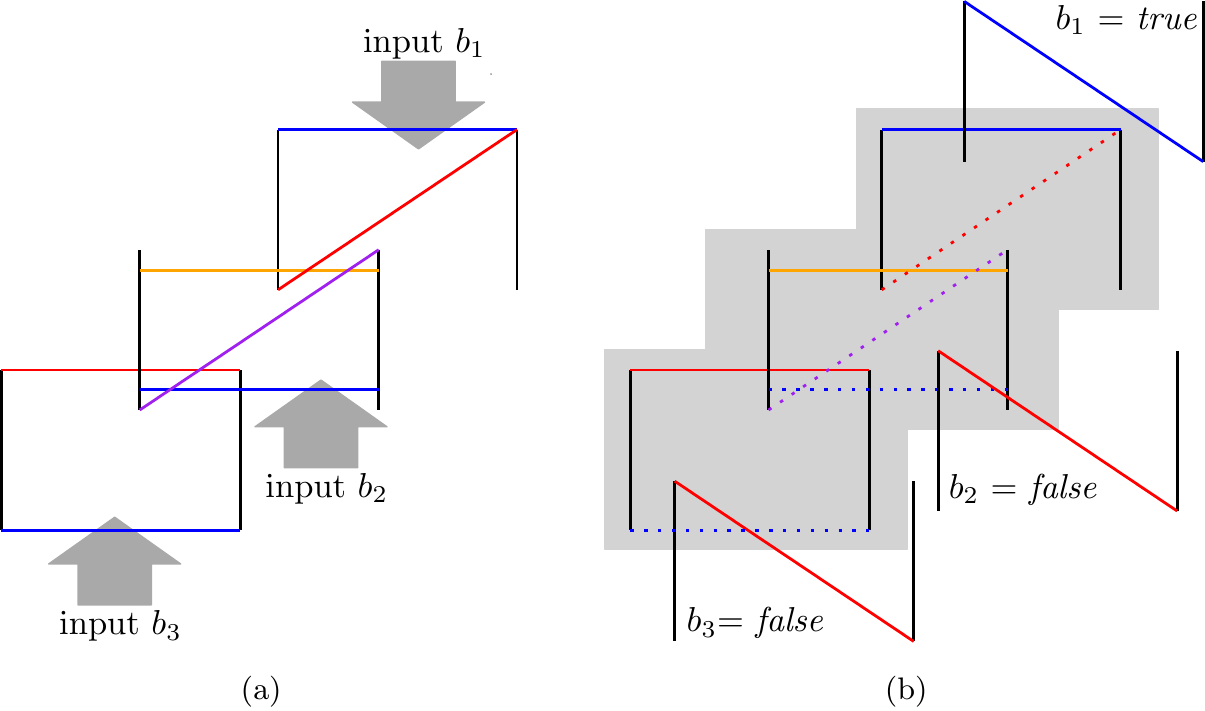}
		\caption{Clause gadgets. (a) The gadget consists of three tubes. The three literals $\{b_1,b_2,b_3 \}$ that participate in the clause cross the blue edges when their value is \emph{false}. (b) Example of the only non-crossing solution when the first literal is \emph{true} and the other two are \emph{false} (dotted edges create crossings).}
	\label{fig:NP-hardness-clause}
\end{figure}

\mypar{Clause gadgets.}
In the embedding given in \textsc{Rectilinear 3-SAT}, a clause is represented by a horizontal line segment with three vertical segments, which connect to the variables.
A horizontal segment can be recreated by using a single tube wide enough.
Vertical segments can be represented by a chain of vertically stacked tubes (see \figurename~\ref{fig:NP-hardness}(c)).
The most interesting part of the clause is the point at which the three paths connect, in which the values of the three literals interact.
In our gadget, this is achieved by using three tubes, as shown in \figurename~\ref{fig:NP-hardness-clause}.
The top and bottom tubes have only two possible paths connecting them
(for clarity, in the figures we omit the blockers needed to force this situation).
The middle tube can be connected with three different edges.
The three literals that form the clause attach to it through the blue edges.
More precisely, a literal will have an edge crossing with one of the blue edges of the clause if and only if its value is \emph{false}.
The key property of the clause gadget is that there exist non-crossing paths connecting the three tubes if and only if at least one literal is \emph{true}.
Note that the variable gadgets do not all connect to the clause gadgets from the bottom. However, we can easily achieve this construction by minor modifications to the rectilinear embedding and using the construction in \figurename~\ref{fig:NP-hardness}(c).


With this construction we obtain the desired NP-hardness reduction.
The following lemma is a direct consequence of the construction.


\begin{lemma}
A satisfying truth assignment for the variables in the 3-SAT formula exists if and only if all tubes in the associated construction can be connected with straight line segments without crossings.
\end{lemma}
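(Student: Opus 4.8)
The plan is to establish the biconditional by verifying that the gadgets behave exactly as advertised, and then showing that a global non-crossing drawing corresponds precisely to a consistent, satisfying truth assignment. Since the lemma is stated as ``a direct consequence of the construction,'' the proof should be a careful bookkeeping argument rather than a deep new idea: the hard part has already been done in designing the gadgets, so I would spend the proof articulating each gadget's invariant and then composing them.

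First I would make the blocker invariant precise: because the two unit segments $I_i, I_{i+1}$ are placed collinearly one directly above the other, the tube $T_{uv}$ degenerates to the vertical segment joining them, so the \emph{only} path realizing that edge is that vertical segment itself. Hence no segment in any valid solution can cross it, and I would record this as the blocking property used throughout. I would then state the variable-gadget invariant: by the blocker construction of Figure~\ref{fig:NP-hardness}(b), the tube admits exactly two crossing-free realizations, which I label \emph{true} and \emph{false}. Using the propagation gadget of Figure~\ref{fig:NP-hardness}(c), I would argue that chaining basic gadgets forces all copies associated with one variable (and its occurrences routed toward the clauses) to carry the same value, and that the negation gadget of Figure~\ref{fig:NP-hardness}(d) flips exactly between the two states. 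The upshot is a well-defined correspondence between crossing-free drawings restricted to the variable/propagation region and truth assignments to the variables.

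Next I would isolate the clause-gadget invariant, which is the crux: for the three-tube configuration of Figure~\ref{fig:NP-hardness-clause}, a literal routed into the gadget crosses the associated blue edge exactly when it arrives in its \emph{false} state, and there exist non-crossing paths connecting the three tubes \emph{if and only if} at least one of the three literals is \emph{true}. I would verify this by the case analysis already suggested by Figure~\ref{fig:NP-hardness-clause}(b): the top and bottom tubes each have two options and the middle tube three, and a short enumeration shows that all three literals being \emph{false} forces a crossing, while any single \emph{true} literal frees up a compatible middle-tube routing. This is the step I expect to be the main obstacle, since it is where the combinatorics of the three interacting tubes must be shown to encode the OR of three literals exactly; everything else is propagation and bookkeeping.

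Finally I would assemble the two directions. Given a satisfying assignment, set every variable gadget to the corresponding state, propagate consistently through the chains (possibly via negation gadgets), and invoke the clause invariant to route each clause gadget without crossings; since tubes belonging to different gadgets are confined to disjoint regions of the plane by the rectilinear embedding (with the minor re-routing noted after Figure~\ref{fig:NP-hardness-clause} ensuring all gadgets attach from below), no crossings arise between gadgets, yielding a crossing-free drawing of all tubes. Conversely, given a crossing-free drawing, the blocker and variable invariants force each variable gadget into one of its two states, defining a truth assignment, and the clause invariant guarantees that every clause has at least one \emph{true} literal, so the assignment satisfies the formula. Combining both directions gives the claimed equivalence, and since \textsc{Rectilinear planar 3-SAT} is NP-complete while membership in NP is immediate (a candidate drawing is checkable in polynomial time), the construction establishes NP-completeness.
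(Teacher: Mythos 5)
Your proposal is correct and follows essentially the same route as the paper: both directions are obtained by composing the gadget invariants (blockers force uncrossable vertical segments, variable gadgets have exactly two states, propagation/negation chains preserve or flip them, and the clause gadget admits a non-crossing routing iff some literal is true). Your version merely spells out these invariants more explicitly than the paper's terse ``direct consequence of the construction'' argument, which is a reasonable elaboration rather than a different approach.
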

\begin{proof}
First consider an input 3-SAT formula with a satisfying truth assignment.
Each variable value is mapped to one of the two possible edges (red or blue) in its corresponding variable gadget.
By construction, all the tubes that propagate this value can be connected without crossings as well, choosing the same edge color (or the opposite, when they need to propagate the value to a clause where the variable appears negated).
Since each clause can be satisfied with the chosen truth assignment, at least one of the input values to the clause must be true.
Then the key property of the clause gadget implies that there is a way to connect the three tubes forming the clause without crossings. 
Therefore all tubes can be connected without crossings.
Conversely, if all tubes in the construction can be connected without crossings, each variable is associated with a single truth value consistently. Moreover,  the tubes of each clause gadget can also be connected without crossings. Then the key property of the clause gadget implies that each clause can be satisfied, and overall that a satisfying truth assignment exists.
\end{proof}

It remains to show that the problem is in NP.


\begin{lemma}
Given $n$ tubes defined by unit vertical segments, the problem of deciding if the tubes can be connected with straight line segments is in NP.
\end{lemma}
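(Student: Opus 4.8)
The plan is to show membership in NP by exhibiting a polynomial-size certificate together with a polynomial-time verifier; the certificate will simply be the coordinates of the chosen points $p_{v_1}, \ldots, p_{v_{2n}}$. First I would parametrize a candidate solution. Since every region $I_i$ is a vertical segment, the $x$-coordinate of $p_{v_i}$ is fixed by the input, so the point is determined by a single real value, its $y$-coordinate $y_i$, which is constrained to lie in an interval of length one. A straight segment between a point of $I_u$ and a point of $I_v$ is automatically contained in the tube $T_{uv} = \mathrm{conv}(I_u \cup I_v)$, so the tube containment is free and the only thing left to certify is that no two edges intersect in their interiors.

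The crucial observation is that, because all $x$-coordinates are fixed, each orientation predicate $\mathrm{orient}(p_a,p_b,p_c) = (x_b - x_a)(y_c - y_a) - (y_b - y_a)(x_c - x_a)$ is a \emph{linear} function of the variables $y_1, \ldots, y_{2n}$. Whether two edges cross in their interiors is decided by a constant number of sign tests on such predicates (together with a check for collinear overlap, which is again a Boolean combination of linear conditions). Hence, for each of the $O(n^2)$ pairs of edges, the statement ``these two edges do not cross'' is a disjunction of linear inequalities, and the set of all feasible drawings is a union of polytopes in the $2n$ variables $y_1,\ldots,y_{2n}$.

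The certificate is the coordinate vector $(y_1,\ldots,y_{2n})$ itself, and it remains to argue that a polynomial-size one exists whenever the answer is yes. For each pair of edges, choosing one disjunct that witnesses non-crossing fixes a single polytope $P$ described by $O(n^2)$ linear inequalities whose coefficients have bit-size polynomial in the input. If a crossing-free drawing exists, then some such $P$ is nonempty, and by the standard bound on basic feasible solutions of a linear program, $P$ then contains a point whose coordinates are rationals of bit-size polynomial in the input. Taking this point as the certificate keeps it polynomially bounded.

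The verifier, given a coordinate vector $(y_1, \ldots, y_{2n})$, checks in polynomial time that each $y_i$ lies in its interval and that none of the $O(n^2)$ pairs of edges cross in their interiors, using the orientation computations above, and accepts iff all tests pass; correctness and the polynomial running time are then immediate. The main obstacle is precisely the bit-complexity bound: a priori the coordinates witnessing a solution could be arbitrary reals, and it is only the linearity of the orientation predicates in the $y_i$ that lets us invoke linear-programming vertex bounds, rather than the existential theory of the reals, which would merely place the problem in PSPACE.
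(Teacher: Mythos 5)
Your proof is correct and follows essentially the same route as the paper's: parametrize a drawing by the $y$-coordinates $(y_1,\dots,y_{2n})$, observe that the non-crossing conditions become Boolean combinations of linear inequalities because the $x$-coordinates are fixed, and extract a polynomial-bit-size witness from a vertex (basic feasible solution) of the resulting bounded polyhedral cell. Your version is if anything slightly more explicit than the paper's about selecting one disjunct per pair of edges to isolate a single polytope before invoking the vertex bound.
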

\begin{proof}
Note that it is very easy to check whether there
is a crossing or not, given the coordinates of the segment of each tube. But we also need to show that these coordinates can be represented by a polynomial number of
bits. In order to show this, we look at the problem differently.

A solution to the problem can be described as two $y$-coordinates for each tube, $y_i, y_{i+1}$, representing the $y$-coordinates of the endpoints of the segment.
Then the problem can be seen as finding some constraints on the $y$-coordinates $y_i$. First consider the constraint that the line segments must be non-crossing.
It is easy to see that this can be expressed as a series of \emph{orientation tests}, in which one checks if a point is to the left or right of an oriented line through two other points.
An orientation test consists of checking the sign of the determinant of a $3 \times 3$ matrix, which in this case has only one row of variables (the $y$-coordinates of the three points in question).
Therefore this test results in a linear expression.

So checking if two line
segments are non-crossing is a boolean formula on linear constraints. The constraint that an endpoint of a segment is on a vertical segment is clearly also linear.

As described above, the problem consists in finding a vector $(y_1, y_2, \ldots, y_{2 n})$ satisfying a boolean formula on linear constraints. Assume that the problem has a solution
$(y_1, y_2, \ldots, y_{2 n})$. This solution must be in a $2 n$-dimensional cell bounded by the hyperplanes representing the linear constraints (or it is on an intersection of
hyperplanes). Such a cell is bounded, due to the constraints forcing the endpoints of the segments to be on the given vertical segments.
 We can simply choose a solution that is on one of the corner points of the cell. Note that this
solution is on an intersection of hyperplanes. Because the intersection of a collection of $2 n$ hyperplanes can be represented by a polynomial number of bits, the solution of a
problem instance can be as well. Finally, given the vector $(y_1, y_2, \ldots, y_{2 n})$, the linear constraints can easily be checked. Thus, the problem is in NP.
\end{proof}

\begin{theorem}
Given $n$ tubes defined by unit vertical segments, deciding if the tubes can be connected with straight line segments is NP-complete.
\end{theorem}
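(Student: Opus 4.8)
The plan is to combine the two preceding lemmas, since together they are designed to yield exactly this statement. Membership in NP is immediate from the second lemma, which shows that a witness (the vector of $2n$ endpoint $y$-coordinates) can be encoded with a polynomial number of bits and that the non-crossing and incidence constraints can be verified in polynomial time. So the remaining work is to assemble the gadgets of this section into a genuine polynomial-time reduction from \textsc{Rectilinear planar 3-SAT}, whose NP-completeness we invoke.

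For the hardness direction I would proceed as follows. Given an instance of \textsc{Rectilinear planar 3-SAT}, I start from its rectilinear embedding: variables on a horizontal line, clauses as horizontal segments with up to three vertical connectors. I replace each variable by a basic variable gadget (\figurename~\ref{fig:NP-hardness}(b)), route its truth value along the connectors of the embedding using chains of basic gadgets that propagate the value in the four cardinal directions (\figurename~\ref{fig:NP-hardness}(c)), inserting a negation gadget (\figurename~\ref{fig:NP-hardness}(d)) on each connector leading to a clause where the variable appears negated. Each clause is replaced by the three-tube clause gadget of \figurename~\ref{fig:NP-hardness-clause}, with the minor re-routing noted in the text so that all three literals enter from below. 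Every tube, including all blockers, is defined by unit vertical segments, so the produced instance satisfies the unit-length hypothesis. Correctness of this construction---that the assembled instance admits a crossing-free straight-line drawing if and only if the formula is satisfiable---is precisely the first lemma, so I simply cite it.

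What remains, and what I expect to be the only genuinely delicate point, is arguing that the reduction runs in polynomial time, i.e.\ that the number of tubes and the bit-complexity of their coordinates stay polynomial in the size of the 3-SAT instance. I would argue this by placing the construction on an integer (or bounded-denominator rational) grid whose dimensions are polynomial in the size of the given rectilinear embedding: the embedding itself has polynomial size, each gadget occupies a constant-size patch of the grid, and each propagation chain has length bounded by the corresponding connector length in the embedding, hence polynomial. Thus the total number of tubes is polynomial and every segment endpoint has coordinates of polynomially many bits. I would also remark that the small perturbations used to break the collinearities in the blocker and variable gadgets (footnoted earlier) can be taken with polynomially bounded bit-complexity without affecting gadget behaviour, so the coordinates remain succinct after perturbation.

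Putting these together, the construction is a polynomial-time many-one reduction from an NP-complete problem, the instances produced use only unit vertical segments, and by the first lemma it preserves yes/no answers; combined with membership in NP from the second lemma, this establishes NP-completeness. \qed
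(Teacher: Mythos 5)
Your proposal matches the paper's intended argument exactly: the theorem is stated as an immediate consequence of the two preceding lemmas (reduction correctness for hardness, the coordinate-encoding lemma for membership in NP), and the paper gives no further proof. Your added remarks on the polynomial size and bit-complexity of the constructed instance are a reasonable elaboration of details the paper leaves implicit, not a different route.
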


\section{Monotone paths}
\label{sec:monotone}

In this section we consider edges drawn as $x$-monotone paths. 
A path is said to be $x$-monotone if no vertical line intersects the path more than once, see \figurename~\ref{fig:example_monotone} for an illustration.

\begin{figure}[t]
	\centering
		\includegraphics{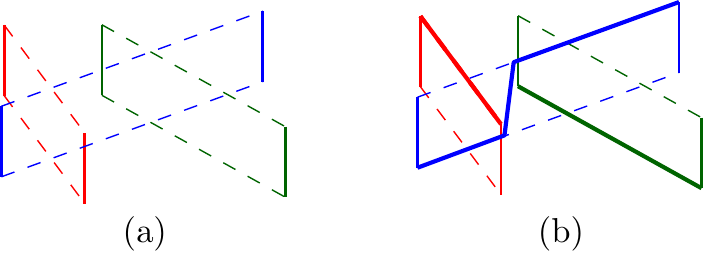}
		\caption{Example using monotone paths. (a) Instance with three tubes. (b) A solution using $x$-monotone paths; note that this instance has no solution if all edges must be drawn as straight line segments.}
	\label{fig:example_monotone}
\end{figure}

\begin{wrapfigure}[10]{r}{0.14\textwidth}
  \begin{center}
    \includegraphics[width=0.12\textwidth]{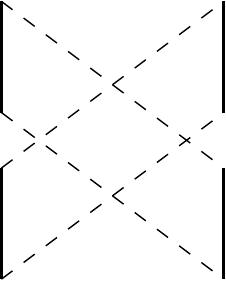}
  \end{center}
  \caption{A full crossing.} \label{fig:FullCrossing}
\end{wrapfigure}
We start with some observations that hold for arbitrary paths.
%
We say that two tubes \emph{fully cross} if the vertical segments are completely disjoint from the other tube, and the 
intersection of the two tubes is nonempty (see \figurename~\ref{fig:FullCrossing}).
The first basic observation is that whenever two tubes fully cross, no solution can exist.
Therefore we assume from now on that no two tubes fully cross.
The most interesting cases occur when two tubes intersect, without fully crossing.
This necessarily happens because (at least) one of the vertical segments of a tube intersects the other tube.
Figure~\ref{fig:Intersections} shows examples of such situations.
We distinguish between \emph{single intersections}, where only one tube segment intersects another tube, or \emph{double intersections}, where two different segments intersect another tube (either both from the same tube, or one from each).

\begin{figure}[t]
	\centering
		\includegraphics[scale=1,page=1]{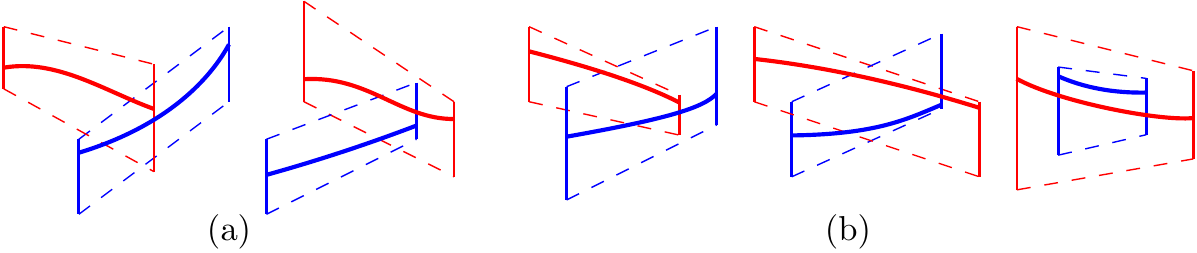}
		\caption{Examples of tube intersections and solutions: (a) single and (b) double. Double intersections also admit solutions with the inverse vertical red/blue order.}
	\label{fig:Intersections}
\end{figure}

Single intersections locally induce a vertical order between the paths in any solution.
For instance, in the situations in \figurename~\ref{fig:Intersections}(a), the red tube can be considered \emph{above} the blue one, because in any solution the red path will be above the blue one at the $x$-coordinate equal to that of the vertical segment creating the intersection.
On the other hand, no such order exists for a double intersection.
Indeed, in any double intersection there are solutions with both orders of the paths in the tubes, see \figurename~\ref{fig:Intersections}(b).
Based on this we define the \emph{order graph}.

\mypar{Order graph.} The order graph of a set of tubes has a vertex for each tube and a directed edge from $T_i$ to $T_j$ if $T_i$ and $T_j$ have a single intersection where $T_j$ is above $T_i$.
We also add a directed edge from $T_i$ to $T_j$ if $T_i \cap T_j = \emptyset$ and $T_i$ and $T_j$ share an $x$-coordinate where $T_j$ is above $T_i$.
Double intersections are not represented in the order graph.
The order graph encodes enough information to decide whether a solution exists using $x$-monotone paths.


\begin{theorem}
\label{thm:monotone}
Given a set of tubes defined by vertical segments, the tubes can be connected with $x$-monotone paths if and only if the order graph is acyclic and no two tubes fully cross.
\end{theorem}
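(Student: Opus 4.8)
The plan is to prove the two directions of the equivalence, and to dispose of the full-crossing condition at once: its necessity is exactly the basic observation already made (two fully crossing tubes admit no solution), while its sufficiency will only be used implicitly in the construction below. So the real content is the equivalence between acyclicity of the order graph and the existence of non-crossing $x$-monotone paths. Throughout I write $P_i$ for the path routed in tube $T_i$ and $X_i$ for its $x$-range, and I define the relation $i \prec j$ to hold when $X_i \cap X_j \neq \emptyset$ and $P_j$ lies above $P_i$ on that overlap.

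For \emph{necessity}, assume a solution exists and show the order graph is acyclic. First I would check that every order-graph edge is forced: if $i \to j$ comes from a single intersection, the witnessing vertical segment forces $P_j$ above $P_i$ at its $x$-coordinate, and since the paths are $x$-monotone and do not cross, $P_j$ stays above $P_i$ on all of $X_i \cap X_j$; for an edge between disjoint tubes that merely share an $x$-coordinate the separation is forced throughout the overlap automatically. Hence every order-graph edge implies $\prec$, and it suffices to show $\prec$ is acyclic. The key observation is that on any family of tubes that pairwise overlap in $x$, Helly's theorem in one dimension yields a common $x^{*}$, and at $x^{*}$ all the paths are present and strictly ordered by their $y$-values; this total order refines every pairwise relation, so $\prec$ restricted to a pairwise-overlapping family is a strict total order and in particular acyclic. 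Now take a shortest $\prec$-cycle $i_1 \prec i_2 \prec \cdots \prec i_k \prec i_1$: consecutive tubes overlap in $x$, while any overlapping non-consecutive pair would be $\prec$-comparable and supply a chord, contradicting minimality. Thus the $x$-ranges of the cycle form an \emph{induced} cycle in an interval overlap graph. Since interval graphs are chordal they contain no induced cycle of length $\geq 4$, and a length-$3$ cycle has all three $x$-ranges pairwise overlapping, so by the previous observation $\prec$ would be a total order there, contradicting the $3$-cycle. Hence no $\prec$-cycle exists and the order graph is acyclic.

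For \emph{sufficiency}, assume the order graph is acyclic and no two tubes fully cross, and construct a solution. I would fix a topological order of the DAG, giving each tube a distinct rank, and aim for paths whose vertical order at every $x$-coordinate agrees with rank. Concretely, I would run a left-to-right sweep: at each $x$ the active tubes occupy vertical slabs $[\ell_i(x),u_i(x)]$, and I place the path heights $y_i(x)$ inside these slabs in increasing rank order, continuously in $x$. Acyclicity guarantees consistency: whenever the geometry forces an order (disjoint tubes sharing an $x$-coordinate, or a single intersection) that forced order is an order-graph edge and hence agrees with rank, so no slab is pinned on the wrong side of another; double intersections impose no constraint, so routing them in rank order is always admissible; and the absence of full crossings prevents any configuration in which two paths would be compelled to swap. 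It then remains to verify that rank-increasing heights can be chosen inside all active slabs simultaneously and varied continuously across sweep events, which I would establish by a greedy bottom-to-top placement, using that overlapping slabs leave room to insert paths in any order while every forced separation already respects rank.

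I expect the main obstacle to be the necessity direction, and specifically ruling out directed cycles of length $\geq 3$: the local ``above'' constraints live at different $x$-coordinates, so there is no a priori common vertical line at which to read off a contradiction, and naive transitivity genuinely fails, since locally consistent orders can have a cyclic union in general. The insight that overcomes this is to pass to a shortest cycle so that its $x$-ranges become an induced cycle in an interval overlap graph, and then invoke chordality to exclude length $\geq 4$ and Helly's theorem to handle length $3$; marrying this combinatorial fact with the non-crossing, $x$-monotone structure of the paths is the crux. The secondary technical point is the continuity of the rank-ordered height assignment in the construction, which the greedy sweep above is designed to settle.
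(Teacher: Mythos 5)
Your \emph{necessity} direction is correct and takes a genuinely different route from the paper. The paper extends the $x$-monotone paths of a hypothetical solution to unbounded $x$-monotone curves (following other paths when the horizontal extensions collide); unbounded non-crossing monotone curves are globally totally ordered by height, and this total order refines every order-graph edge, so the graph is acyclic. You instead work directly with the relation $\prec$ induced by the paths, pass to a shortest cycle, observe that minimality forces the $x$-ranges to form an induced cycle of the interval intersection graph, and kill length $\geq 4$ by chordality and length $3$ by one-dimensional Helly. Both arguments are sound; the paper's is shorter once the extension is accepted, while yours avoids the (slightly delicate) construction of the extensions at the cost of the chordality detour. Do note two small points you should make explicit: that a single-intersection edge, which only forces the order at one $x$-coordinate, propagates to the whole overlap precisely because the paths are $x$-monotone and non-crossing; and that $\prec$ is well defined (strict) on every overlapping pair, which needs a word about paths that touch without crossing.

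The \emph{sufficiency} direction, however, has a genuine gap: the statement ``every forced separation already respects rank, so no slab is pinned on the wrong side of another'' is precisely the content of this direction of the theorem, and you assert it rather than prove it. Concretely, your sweep needs the following lemma: whenever two tubes' cross-sections at some $x$ are vertically disjoint with $T_i$ above $T_j$, the order graph contains the edge $T_j \to T_i$. For tubes that are globally disjoint this is the definition, but for tubes that intersect it is not: the only edge between them comes from their single intersection, whose orientation is defined at a \emph{different} $x$-coordinate, so you must argue that this orientation agrees with the separation order wherever the cross-sections come apart. This is where the ``no full crossing'' hypothesis actually enters (if the separation order flipped from one side of the overlap to the other, all four vertical segments would be disjoint from the opposite tube and the pair would fully cross), and without that argument the construction could be asked to place two paths in an impossible vertical order. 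You would also still need to turn the pointwise feasibility into continuous paths that start and end on the prescribed vertical segments. The paper sidesteps all of this with an explicit greedy construction: it processes tubes bottom-to-top in a linear extension of the order graph, routes each path along the lower envelope of its tube's bottom edge and the previously drawn paths, maintains the invariant that every path is built from bottom edges $\ell_j$ ($j \leq i$) and vertical drops, and uses that invariant to show the path can never escape through the top of its tube without contradicting the chosen order. Your plan is morally the same (realize the topological order as a vertical order), but as written the decisive feasibility step is missing.
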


\begin{proof}
First we prove that if the order graph is acyclic, and no two tubes fully cross, then there exists a solution.
The directed edges in the order graph induce a partial order that can be extended to a total order on the tubes. Let $T_1,\dots,T_n$ be that order from bottom to top.
Let $\ell_i$ denote the bottom side of tube $T_i$.
We maintain the following invariant: for $i=1,\dots,n$, path $p_i$ of tube $T_i$ consists of parts of $\ell_j$ with $1 \leq j \leq i$ and vertical segments. We can clearly draw $p_1$ along $\ell_1$.
Suppose now that we want to draw path $p_i$ ($i > 1$). We start $p_i$ at the highest intersection of the left vertical segment with any path $p_j$ ($j < i$), or at $\ell_i$ if no such path exists. We follow a restricting path $p_j$ until the right side of $T_j$, after which we drop down vertically, hitting either another path $p_k$ or $\ell_i$. In the latter case, or if we already hit $\ell_i$ before reaching the right side of $T_j$ while following $p_j$, we can follow $\ell_i$ until hitting another restricting path. We then repeat this process until we reach the right side of $T_i$. The resulting path $p_i$ only follows paths $p_j$ ($j < i$), vertical segments, and $\ell_i$, and thus satisfies the invariant. Finally note that $p_i$ can leave $T_i$ only if it is restricted by a path $p_j$ intersecting the top of $T_i$. By the invariant, there must be some $\ell_k$ ($k < i$) intersecting the top of $T_i$,  violating the order.

It remains to prove the result in the opposite direction.
It is clear that if two tubes fully cross, there is no solution, so we focus on the case of a directed cycle in the order graph.

First we observe that any solution with $x$-monotone paths can be extended to a set of unbounded monotone curves, as follows.
Take the path corresponding to any tube in the cycle and extend it towards both sides horizontally.
If while doing that the extended path hits another path, follow it without crossing it until it ends, and then continue horizontally.
Now, repeat that for each of the tubes in the solution.
If while extending the path of a tube it hits the extended path of another tube, from then on both paths go together. See \figurename~\ref{fig:ExtendingPaths}.

\begin{figure}[tb]
	\centering
		\includegraphics[scale=1,page=1]{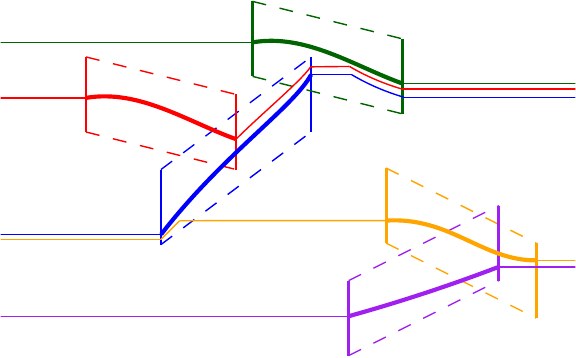}
		\caption{Extending monotone paths to monotone unbounded curves.}
	\label{fig:ExtendingPaths}
\end{figure}

Since the initial paths were $x$-monotone, the extensions are also $x$-monotone, because each of them consists of a concatenation of horizontal segments and parts of $x$-monotone paths.
The final result is a set of $x$-monotone unbounded curves that respect all the order relations between the tubes, namely, if $T_j$ is above $T_i$, then the extended path of $T_j$ is above that of $T_i$. Since the $x$-monotone unbounded curves are totally ordered, and this order respects all order relations in the order graph, the order graph must be acyclic. Thus, if the order graph contains a directed cycle, then no solution with $x$-monotone curves can exist. 
%
%
\end{proof}

\noindent Therefore, to solve the problem it is enough to compute the order graph and check the two conditions in the theorem. This can be done in $O(n^2)$ time, since the order graph can have at most quadratic size.



\section{Arbitrary paths}
\label{sec:general}

\begin{figure}[t]
\centering
\includegraphics{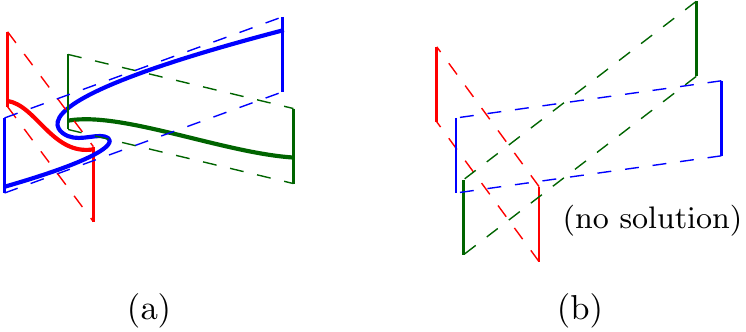}
\caption{Examples of two cycles in the order graph, one that has a solution (a) with arbitrary paths, and one that does not (b).}
\label{fig:CycleWithSolution}
\end{figure}

If we allow edges to be drawn by arbitrary paths, then a cycle in the order graph can sometimes be realized, as shown in \figurename~\ref{fig:CycleWithSolution}(a)---the cycle here is blue $\rightarrow$ red $\rightarrow$ green $\rightarrow$ blue. However, that is not always the case, as the example in  \figurename~\ref{fig:CycleWithSolution}(b) shows.


Nevertheless, if we disallow double intersections, then we can still decide in polynomial time whether a solution exists.
The key idea is to use a result by Kratochv{\'{\i}}l and Ueckerdt~\cite{ku-nccp-13} that states that if the regions (in our case, tubes) form a set of pseudo-disks, then there is always a solution.
Two tubes with a single intersection may not be pseudo-disks, but we can try to convert them into pseudo-disks by cutting off parts that cannot be used in any solution.
This leads to a procedure that allows us to determine in polynomial time if a solution exists.

\begin{theorem}
\label{thm:pseudodisks}
Given a set of tubes defined by vertical segments such that no two tubes form a double intersection, one can determine if all the tubes can be connected without crossings using arbitrary paths in polynomial time.
\end{theorem}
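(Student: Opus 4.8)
The plan is to reduce the instance to the pseudo-disk case of the non-crossing connector theorem of Kratochv{\'{\i}}l and Ueckerdt~\cite{ku-nccp-13}, which guarantees that non-crossing connectors always exist when the regions form a family of pseudo-disks. Tubes are convex, but a pair of tubes need not be pseudo-disks: even a single intersection can create two tubes whose boundaries cross four times (this happens when the tube whose segment pokes in is long and slanted, so that both its top and bottom sides slice through the other tube). The idea is therefore to shrink every tube $T$ to a subregion $R_T \subseteq T$ such that (i) every path that can occur in a crossing-free solution is contained in the $R_T$'s, and (ii) the family $\{R_T\}$ consists of pairwise pseudo-disks; then \cite{ku-nccp-13} applies to $\{R_T\}$.

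First I would dispose of the easy obstructions. If two tubes fully cross then no solution exists (the observation accompanying \figurename~\ref{fig:FullCrossing}), so I assume from now on that no two tubes fully cross. Under the hypothesis that no two tubes form a double intersection, every pair of distinct tubes is then either disjoint or forms a single intersection; in particular no tube can contain another, as that would force both of the inner tube's segments to meet the outer tube, a double intersection. Disjoint pairs are automatically pseudo-disks, so only single intersections need to be repaired. For each single intersection I would use the local vertical order it forces in \emph{every} solution (as recorded for single intersections in \figurename~\ref{fig:Intersections}(a): at the $x$-coordinate of the segment creating the intersection, one tube's path is above the other's in any solution).

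Next comes the cutting. Fix a single intersection in which $T_j$ is forced above $T_i$. I would argue that a well-defined sub-region of each tube is unusable and trim it away: the path of the lower tube $T_i$ cannot reach the portion of the overlap that lies above the separating arc, and symmetrically for $T_j$, so I cut each tube along the appropriate boundary arc of the other, leaving $R_{T_i}$ and $R_{T_j}$. Two properties must be established: (a) trimming preserves solvability, i.e.\ no point removed from a tube can be used by that tube's path in any crossing-free solution (this is where the forced order is used); and (b) after trimming, the two regions meet in at most two boundary points, i.e.\ they form a pseudo-disk pair. Because the cut follows an arc of the other tube's boundary and is confined to the overlap, it should remove exactly the ``tongues'' responsible for the extra boundary crossings, turning four crossings into two.

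The main obstacle is global consistency. The cut for the pair $(T_i,T_j)$ alters $\partial T_i$, which could in principle affect the pseudo-disk property of $T_i$ with a third tube $T_k$. I would perform all cuts simultaneously, defining each $R_T$ as $T$ minus the union of the unusable regions arising from all of its single intersections, and then prove that the resulting family is \emph{simultaneously} pairwise pseudo-disk and that each $R_T$ is simply connected. The key points to verify are that the newly introduced boundary arcs (pieces of other tubes' boundaries) do not create fresh crossings with a third region, and that no cut removes a point that some solution genuinely needs. Finally I would check, for each tube, whether its two endpoint segments still lie in one connected component of $R_T$: if some $R_T$ no longer connects its two segments, then by property (a) no solution exists and the algorithm reports \emph{no}; otherwise $\{R_T\}$ is a pseudo-disk family, so by \cite{ku-nccp-13} non-crossing connectors exist, giving a solution. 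Classifying the pairs, computing the cuts, and testing connectivity and the pseudo-disk property are all polynomial in the number of tubes, which yields the claimed running time.
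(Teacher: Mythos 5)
Your overall strategy---trim each tube down to the subregion that is usable in every solution so that the family becomes pseudo-disks, then invoke Kratochv{\'{\i}}l and Ueckerdt---is exactly the paper's. The genuine gap is in the step you yourself flag as ``the main obstacle'': you propose to perform all cuts \emph{simultaneously}, once, and then assert that the resulting family is pairwise pseudo-disk. That assertion is not established, and a single round of cutting does not suffice. Cutting an ear of $T_i$ against $T_j$ replaces part of $\partial T_i$ by an arc of $\partial T_j$, and this new boundary can interact with a third tube $T_k$ in new ways; in particular, as the paper shows in Figure~\ref{fig:CuttingCorners}, removing an ear can turn a single intersection with a third tube into a \emph{full crossing}. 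The paper therefore runs an \emph{iterative} procedure---while some pair of current regions has more than two boundary intersections, cut the corresponding (no longer necessarily triangular) ear---and proves termination in $O(n^2)$ rounds by observing that each cut removes at least one face of the arrangement of tubes, whose total complexity is $O(n^2)$. Your proposal is missing both this iteration and the termination argument.

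A second, related problem is your final ``no'' test. You report failure only when some trimmed region $R_T$ no longer connects its two vertical segments. But the obstruction that cutting actually produces is a full crossing between two trimmed regions (again Figure~\ref{fig:CuttingCorners}): both regions remain connected and still contain their segments, so your connectivity check passes, yet the pair's boundaries cross four times, there is no forced vertical order and hence no further ear to cut, and no solution exists. The correct test after the trimming process is whether any two truncated tubes fully cross; if none do, the truncated tubes are pseudo-disks and the Kratochv{\'{\i}}l--Ueckerdt theorem applies. The rest of your argument---disposing of full crossings up front, justifying each individual cut via the order forced by a single intersection, and placing one endpoint on each segment before applying the connector theorem---matches the paper.
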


\begin{proof}
Kratochv{\'{\i}}l and Ueckerdt~\cite{ku-nccp-13} showed that the non-crossing connectors problem always has a solution when the regions form a collection of pseudo-disks~\cite[Theorem 2]{ku-nccp-13} (i.e., the boundaries of any two
regions intersect in at most two points).
In our context, the regions are the tubes.
To apply their result to our problem we need two things.

First, the tubes need to be pseudo-disks.
If no two tubes fully cross or create a double intersection, the only way in which they can interact is through single intersections.
Two tubes that intersect in a single intersection are not always pseudo-disks, since the tube boundaries can intersect in four points.
However, it is possible to make them pseudo-disks by cutting off the part of one of the tubes that sticks out of the other, as shown in \figurename~\ref{fig:Pseudodisks}(a).
We refer to this part as an \emph{ear}.
Initially, an ear is a triangle with one vertical edge, which is part of the vertical segment of a tube, and two other edges that are parts of the boundaries of the two tubes involved.
The path order forced in every single intersection implies that in any solution, the ear will be separated from its tube by the path of the other tube, meaning that there can be no path inside the ear. Thus we can cut off the ear without affecting any solution.
For instance, in \figurename~\ref{fig:Pseudodisks}(a), the blue path can never enter the ear without crossing the red path.

\begin{figure}[tb]
  \center
  \includegraphics{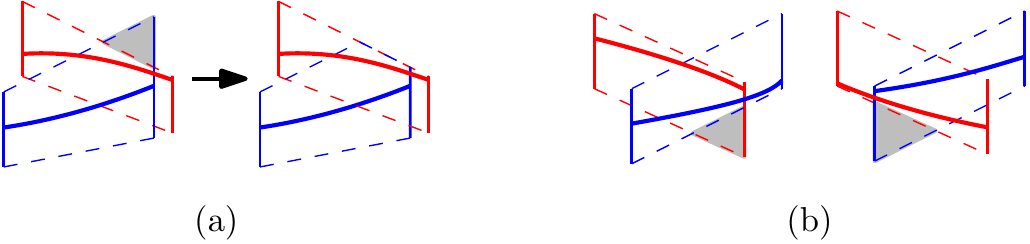}
  \caption{(a) Cutting off an ear (shaded gray) so that two tubes that have a single intersection become pseudo-disks; the gray region is guaranteed to be empty in any solution, if one exists.
  (b) The approach does not extend to double intersections because then one does not know in advance which of the two gray regions will be empty.
  }
  \label{fig:Pseudodisks}
\end{figure}

Cutting off an ear can introduce a full crossing between two tubes. Figure~\ref{fig:CuttingCorners} shows an example where the removal of an ear that contains part of the vertical segment of another tube generates a full crossing. Since we only cut off parts of the tube that cannot be used, we can conclude in this situation that the original problem has no solution. On the other hand, cutting off an ear cannot introduce a double intersection: tubes only become smaller, and thus, a vertical segment that is disjoint from a tube will remain disjoint from that tube.

We now use the following simple approach. While there exists a pair of tubes with more than two intersections (that is, the current tubes are not pseudo-disks), we cut off the corresponding ear. This ear always exists, as we have only single intersections (although it needs not to be triangular). After this procedure ends, the set of tubes are pseudo-disks.

We need to argue that this procedure terminates after a polynomial number of steps. To that end, note that the complexity of the arrangement formed by the tubes cannot increase. Furthermore, an ear always consists of a non-empty set of faces of the arrangement of tubes. In particular, in every step there is a tube that loses at least one face of the arrangement. Since the complexity of the arrangement is $O(n^2)$, the number of steps in this procedure is $O(n^2)$ as well.

The result of the procedure is a set of truncated tubes with the same solutions as the original ones.
To determine if a solution exists, it is enough to check if any two of the truncated tubes fully cross.
If two truncated tubes cross, there is no solution for the original tubes either.
If no two fully cross, then the truncated tubes are pseudo-disks.

Second, to apply the result of Kratochv{\'{\i}}l and Ueckerdt~\cite[Theorem 2]{ku-nccp-13}, we need a discrete set of points to be connected inside each tube.
It is enough to place an endpoint on each vertical segment of the truncated tubes.
We obtain a set of regions that are pseudo-disks with two endpoints in each. 
The aforementioned result guarantees that they have a solution, and thus the original set of tubes also does.
\end{proof}


\noindent We note that the technique used above to cut off ears in single intersections to make them pseudo-disks does not extend to double intersections.
As \figurename~\ref{fig:Pseudodisks}(b) shows, since the order of the two paths in a double intersection is not fixed, it is not possible to know in advance which of the two ears will be empty in a solution.

\begin{figure}[tb]
  \center
  \includegraphics{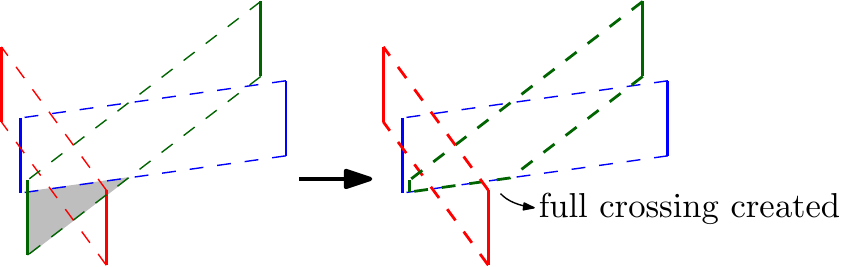}
  \caption{When the ear between the blue and green tubes (shaded gray) is cut off, the single intersection between green and red becomes a full crossing.  This shows that the original problem has no solution.}
  \label{fig:CuttingCorners}
\end{figure}


\section{Conclusions and open problems}
In this work we began the study of a fundamental problem for geographic networks: deciding if a given network can be drawn without crossings.
We have focused on a simple variant where regions are vertical segments, and the graph is a matching.
Despite the apparent simplicity of the problem, it turned out to be very challenging, with varying problem complexities depending on the way in which edges can be drawn.
There are multiple directions in which this work can be continued. 


To begin with, in the restricted setting studied here, the arguably most important variant of arbitrary paths is still not fully solved: our algorithm only works in the absence of double intersections.
If we allow double intersections, the problem remains open. However, we expect the problem to remain polynomial also in this case. Furthermore, we conjecture that the problem admits a Helly-type property, implying a polynomial time algorithm.

\begin{conjecture}
There exists a universal constant $C$ such that, if a set of tubes defined by vertical segments does not admit a solution with arbitrary paths, then there exists a subset of at most $C$ tubes that also does not admit a solution.
\end{conjecture}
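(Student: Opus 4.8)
The plan is to prove the contrapositive in the form of a classification of minimal obstructions. Since feasibility is hereditary (a global solution restricts to a solution on every subset, by keeping the same paths), it suffices to show that every \emph{minimal} set of tubes admitting no solution has size at most $C$. I would therefore enumerate the ways in which feasibility can fail and argue that each is witnessed locally. The two obstructions already visible in the paper serve as calibration points: two tubes that fully cross form a minimal no-instance of size $2$, and the configuration of \figurename~\ref{fig:CuttingCorners}, where cutting an ear turns a single intersection into a full crossing, is a minimal no-instance of size $3$.

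First I would settle the case with no double intersections, where Theorem~\ref{thm:pseudodisks} already does most of the work. In that regime the ear-cutting procedure terminates and the \emph{only} way a solution can fail to exist is that some pair of truncated tubes fully crosses. The structural fact to exploit is that ear-cutting only shrinks tubes and never creates a new single or double intersection; hence the material whose removal is responsible for an induced full crossing between two tubes $A$ and $C$ is controlled entirely by the single intersections occurring at the two ends of $A$ and $C$. I would argue that only a bounded number of neighbouring tubes --- those whose ears are cut closest to the would-be crossing --- are relevant, so that $A$, $C$, and these few neighbours already form a no-instance. This should yield a small constant (I expect $C \le 4$) for all double-intersection-free inputs.

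The hard case, and the real content of the conjecture, is when double intersections are present. Here ear-cutting is no longer valid (\figurename~\ref{fig:Pseudodisks}(b)), so the reduction to pseudo-disks and to the result of Kratochv{\'{\i}}l and Ueckerdt is lost, and the obstructions become genuinely topological, as \figurename~\ref{fig:CycleWithSolution}(b) illustrates. The feature I would try to exploit is the sharp contrast with the monotone setting: for monotone paths a directed cycle of length $k$ in the order graph is an irreducible no-instance, so no Helly constant exists there (Theorem~\ref{thm:monotone}); but with arbitrary paths such long cycles become realizable (\figurename~\ref{fig:CycleWithSolution}(a)), which suggests that the surviving obstructions are rigid and short. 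Concretely, I would encode a candidate solution by the homotopy classes of the $n$ paths relative to the arrangement of all segment endpoints and reformulate feasibility as a consistency (winding/linking) condition on these classes; the conjecture then amounts to showing that any inconsistent configuration contains an inconsistency supported on $O(1)$ tubes.

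The main obstacle is exactly this last reduction. Unlike classical Helly, there is no convexity or shellability to invoke, and a winding obstruction could a priori be global, threaded through a long chain of tubes. The decisive step would be a \emph{shortening} argument: show that in any minimal topological obstruction whose length exceeds some constant one can locally reroute a path --- using the freedom of arbitrary, non-monotone curves --- to uncross or contract part of the cycle, contradicting minimality. Performing such rerouting moves while keeping every path inside its own tube, and in particular controlling their interaction with double intersections (where both admissible orders of the two paths must be accounted for), is where I expect the difficulty to concentrate, and it is plausibly also where the correct value of $C$ is forced.
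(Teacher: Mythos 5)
This statement is stated in the paper as an open \emph{conjecture}; the paper offers no proof of it, and your proposal does not close that gap --- it is a research plan whose two decisive steps are asserted rather than established. Your framing is sound as far as it goes: feasibility is indeed hereditary, so the conjecture is equivalent to bounding the size of minimal infeasible families, and the size-$2$ (full crossing) and size-$3$ (\figurename~\ref{fig:CuttingCorners}) obstructions are correctly identified. But already in the double-intersection-free case your claim that ``only a bounded number of neighbouring tubes are relevant,'' giving $C\le 4$, is unsupported. The ear-cutting procedure of Theorem~\ref{thm:pseudodisks} is iterative ($O(n^2)$ rounds, with non-triangular ears appearing in later rounds), and the full crossing that certifies infeasibility may only emerge after a chain of truncations each caused by a different tube; you would need to show that such a chain of dependencies can always be collapsed to constant length, and you give no argument for that. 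A priori the minimal witness in this regime could consist of a long cascade $T_1,\dots,T_k$ in which each $T_i$ forces a cut on $T_{i+1}$.

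The second and larger gap is the general case with double intersections, which you yourself identify as ``where the difficulty concentrates.'' Your proposed reformulation via homotopy classes and a ``shortening'' or local-rerouting argument for long topological obstructions is a plausible strategy, but no part of it is carried out: there is no definition of the consistency condition, no proof that infeasibility is equivalent to it, and no rerouting lemma. Since the paper cannot even decide feasibility in polynomial time in this regime (that is listed as an open problem alongside the conjecture), this step is precisely the open content of the statement, not a technicality. One further caution: your contrast with the monotone case cuts the other way as a warning sign --- for $x$-monotone paths the order-graph cycles of Theorem~\ref{thm:monotone} give minimal infeasible families of unbounded size, so no Helly constant exists there. Any proof of the conjecture must explain structurally why passing to arbitrary paths destroys \emph{all} long minimal obstructions, not just the cyclic ones illustrated in \figurename~\ref{fig:CycleWithSolution}(a), and your sketch does not yet do so.
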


A whole set of open questions arises when the setting is made more general. 
A first step would be, for example, to consider tubes defined by line segments with more than one orientation (say, using $k>1$ different slopes).
This would probably require a different approach.
Then it is most interesting to understand what is the simplest type of 2-dimensional region  that can be solved in polynomial time when the graph is matching, as in this work.
Going in a different direction, it would also be worthwhile to consider more general families of graphs, such as paths or trees, even if it is for tubes defined by vertical segments.




%

\vspace{10pt}
\mypar{Acknowledgements.} 
We are grateful to the anonymous reviewers for their detailed and useful comments.
R.I.S was partially supported by projects MTM2015-63791-R (MINECO/ FEDER) and Gen.\ Cat.\ 2017SGR1640, and by MINECO's Ram{\'o}n y Cajal program. B.S. and K.V. are supported by the Netherlands Organisation for Scientific Research (NWO) under project no.~639.023.208 and 639.021.541, respectively.

\bibliographystyle{abbrv}
\bibliography{refslong}

\end{document}